\theoremstyle{plain}
\newtheorem*{Th}{Theorem}
\begin{document}
\title{Groups, Information Theory and Einstein's Likelihood Principle}
\author{Gabriele Sicuro}\email{sicuro@cbpf.br}
\affiliation{Centro Brasileiro de Pesquisas F\'isicas, Rua Dr.~Xavier Sigaud, 150, 22290-180, Rio de Janeiro, Brazil}
\author{Piergiulio Tempesta}\email{p.tempesta@fis.ucm.es}
\affiliation{Departamento de F\'{\i}sica Te\'{o}rica II (M\'{e}todos Matem\'{a}ticos de la f\'isica), Facultad de F\'{\i}sicas, Universidad Complutense de Madrid, 28040 -- Madrid, Spain, and Instituto de Ciencias Matem\'aticas (CSIC-UAM-UC3M-UCM), C/ Nicol\'as Cabrera, No 13--15, 28049 Madrid, Spain}
\date{\today}
\begin{abstract}
We propose a unifying picture where the notion of generalized entropy is related to information theory by means of a group-theoretical approach. The group structure comes from the requirement that an entropy be well defined with respect to the composition of independent systems, in the context of a recently proposed generalization of the Shannon--Khinchin axioms. We associate to each member of a large class of entropies a generalized information measure, satisfying the additivity property on a set of independent systems as a consequence of the underlying group law. At the same time, we also show that Einstein's likelihood function naturally emerges as a byproduct of our informational interpretation of (generally nonadditive) entropies. These results confirm the adequacy of composable entropies both in physical and social science contexts.

\end{abstract}
\maketitle

The study of the relations among  statistical mechanics, information theory and the notion of entropy is at the heart of the science of complexity, and in the last decades has been widely explored.
After the seminal works of \textcite{Shannon} and \textcite{khinchin1957} on the foundations of Information Theory, \textcite{jaynes1957} re-formulated Boltzmann--Gibbs (BG) statistical mechanics as a statistical inference theory, where all fundamental equations are consequences of the maximum entropy principle applied to the \textit{BG entropy} $S^{\textrm{BG}}$. Subsequently, \textcite{renyi1961measure,renyi1970probability} introduced a generalized measure of information, now called the \textit{R\'enyi entropy} $S_\alpha^\textrm{R}$, that depends on a real parameter $\alpha$ having the BG entropy as a particular case in the $\alpha\to 1$ limit. 
The $S_q^\textrm{T}$ entropy, introduced by \textcite{havrda1967,tsallis1988}, has been the prototype of the nonadditive entropies studied in the last decades \cite{borges1998,canosa2002,beck2003,hanel2011,tempesta2011group,tsallis2013}. These functionals are generalizations of the BG entropy; they depend on one or more parameters, in such a way that the BG entropy is recovered as a particular limit. Generalized entropies have been successfully adopted for the study of both classical and quantum systems. R\'enyi's entropy, for example, plays a central role in information theory and in the study of multifractality \cite{jizba2004world}; along with the von Neumann entropy, it has been also used extensively in the evaluation of the entanglement entropy of quantum systems \cite{calabrese2009,coser2014,calabrese2014,calabrese2015}.


The study of new entropic forms has led to a new flow of ideas regarding the old problem of the probabilistic versus the dynamical foundations of the notion of entropy. It is well known \cite{cohen2002} that Einstein's approach was very different with respect to the probabilistic methodology of Boltzmann (which eventually emerged as the predominant one). Indeed, Einstein argued that the probabilities of occupation of the various regions of the phase space associated with a physical systems cannot be postulated \textit{a priori}. Instead, only a knowledge of dynamics, obtained by solving the equations of motion, could provide this information. For this reason, in his theory of fluctuations, \textcite{Einstein} introduced the likelihood function $\mathcal{W}\propto \exp\left(S^\textrm{BG}\right)$ as a fundamental statistical quantity (for the sake of simplicity, here and in the following, we put $k_\textrm{B}\equiv 1$, $k_\textrm{B}$ being the Boltzmann constant). He observed that, by composing two independent systems $\mathcal{A}$ and $\mathcal{B}$, the fundamental relation
\begin{equation}
\mathcal{W}(\mathcal{A}\cup \mathcal{B})=\mathcal{W}(\mathcal{A})\mathcal{W}(\mathcal{B}) \label{ELP}
\end{equation}
holds. Equation \eqref{ELP} is epistemologically crucial: it expresses the fact that the physical description of system $\mathcal{A}$ does not depend on the physical description of system $\mathcal{B}$, and \textit{vice versa}. Moreover, it is related to the additivity requirement of the information content of independent systems, as will be explained below.

The likelihood function has a clear physical meaning. It provides the number of accessible configurations in the entire space of possible configurations. In many circumstances, this number is exponential in the size $N$ of the system; therefore we can write $\mathcal{W}\propto \exp\left(N\Sigma\right)$, where $\Sigma$ is an (adimensional) entropy density. \textcite{Parisi} introduced this quantity in the study of disordered systems, calling $\Sigma$ the complexity, or configurational entropy. An equivalent quantity is used in the study of random optimization problems by means of statistical physics techniques \cite{Zamponi}.

By following the analysis in \cite{tsallis2015boltzmann}, we shall call Eq.~\eqref{ELP} the Einstein's likelihood principle; it is a fundamental relation in Einstein's theory of fluctuations \citep{Einstein}. Remarkably, in the context of superstatistics \cite{beck2003}, \textcite{abe2007} generalized Einstein's principle for nonequilibrium systems in the presence of (quenched) temperature fluctuations and using conditional entropies.

The aim of this paper is to provide a novel approach that relates, in a unique framework, classical information theory with both the notion of generalized entropy and Einstein's likelihood principle. Precisely, we shall show that an intrinsic \textit{group--theoretical structure} is at the heart of the multiple connections among these foundational perspectives.

The physical root of this group structure relies in a recent generalization of the classical axiomatic formulation \cite{tempesta2014beyond,tempesta2015new}, originally proposed by Shannon and by Khinchin to characterize the BG entropy. The first three postulates, nowadays called the Shannon--Khinchin (SK) axioms \cite{Shannon,khinchin1957}, define some essential properties that any entropic functional $S$ should satisfy. Let us consider the set $\mathscr P$ of finite discrete probability distributions $P\in\mathscr P$, $P\coloneqq\{p_1,\dots,p_W\}$, $W\in\mathds N\setminus\{0\}$, $p_i\geq 0$, $\sum_{i=1}^Wp_i=1$. We can state the postulates as follows.
\begin{description}
\item[Continuity] The function $S(p_1,\dots,p_W)$ is continuous with respect to all its arguments.
\item[Maximum principle] The function $S(p_1,\dots,p_W)$ is maximized by the uniform distribution.
\item[Expansibility] Adding an event of zero probability does not change the value of the entropy, i.e., $S(p_1,\dots,p_W,0)\equiv S(p_1,\dots,p_W)$.
\end{description}
Also, a fourth axiom, i.e.~\textit{additivity} with respect to the composition of two systems, was required. Under these assumptions, Khinchin proved that the only admissible entropy turns out to be the BG entropy $S^\textrm{BG}[P]\coloneqq-\sum_{i=1}^Wp_i\ln p_i$.

The additivity property was thought to be a sufficient condition for the \textit{extensivity} of BG entropy, which in the formulation of Clausius is an essential requisite for thermodynamics. However, in the last decades it became evident that the two properties, i.e.~additivity and extensivity, are unrelated \cite{touchette2002}. Indeed, denoting by $W(N)$ the number of accessible states of a system with $N$ particles, the BG entropy is not extensive, \textit{on the uniform distribution}, if $W(N)\sim N^\alpha$ for a certain $\alpha\in\mathds R^+$. This scaling is not atypical and it appears often in the framework of complex systems (see e.g.~\cite{hanel2011b}). Recently, the non-extensivity of Boltzmann's entropy over a large class of probability distributions was proved. Surprisingly, R\'enyi's entropy can be extensive in the same contexts \cite{bergeron2016}. Additivity is therefore not an intrinsic physical requirement. At the same time, by renouncing the additivity postulate, new possibilities arise.



In the context of information theory, non--additive entropies provided useful tools, for example, in the study of quantum entanglement \cite{canosa2002,berta2010}. However, the lack of additivity appears to be an important flaw if we want to interpret generalized entropies as \textit{classical} measures of information \cite{renyi1970probability,jizba2004world}. In the following, we shall assume that each system is in a given macrostate, corresponding to a set of possible configurations, or microstates; a probability distribution function is defined over all allowed microstates. In this sense, each state is associated with a given probability distribution that, in our analysis, determines both the entropy and the information measure of the system. With a slight abuse of language, we will call the information measure obtained from this probability distribution as the information measure of the system itself. It is expected that, given two distributions associated with two independent systems (or with the results of two independent experiments), the corresponding total amount of information is nothing but the sum of the information measures obtained by each system (experiment). Moreover, any measurement or change of information in one of the two systems does not affect, nor is influenced by, any property of the other system, being the systems uncorrelated.

This property is certainly satisfied by Boltzmann's and R\'enyi's entropies; however, using non--additive entropic functionals \textit{tout--court} it is not possible to fulfill the requirement above. Therefore, to preserve the interpretation of entropy as a generalized information measure $I(\mathcal A)$ of a given system $\mathcal A$, we firstly postulate the general relation
\begin{equation}\label{Info1}
I(\mathcal A)=f(S(\mathcal A)),
\end{equation}
where the information is assumed to be a function $f$ of the entropy of the system only (hereafter we avoid to make explicit the dependence of $I$ on the specific probability distribution $P$ associated with a macrostate of $\mathcal{A}$).

Second, along the lines of \cite{tempesta2014beyond,tempesta2015new}, we discuss an axiomatic formulation of entropy, generalizing the fourth SK axiom and requiring, instead of additivity, the \textit{composability} property \cite{tsallis2009I}. In this settings, we postulate that the entropy of a composite system be a function of the entropy of the two components only. This simple requirement is, however, rich with consequences. In particular, a specific algebraic structure appears, in which the composition operation plays the role of group operation. In the following, we will show that, for a large class of entropies, an additive information measure having the form of Eq.~\eqref{Info1} is automatically inferred by the algebraic structure itself.
The composition process of two or more systems is formalized as follows:

\begin{description}
\item[Composability axiom] Given two statistically independent systems $\mathcal A$ and $\mathcal B$, each defined on a given probability distribution in $\mathscr P$, there exists a symmetric function $\Phi(x,y)$ such that
\begin{equation}\label{composability}
S(\mathcal A\cup\mathcal B)=\Phi\left(S(\mathcal A),S(\mathcal B)\right).
\end{equation}
Moreover, we require the associativity property
\begin{equation}
\Phi\left(\Phi(x,y),z\right)=\Phi\left(x,\Phi(y,z)\right)
\end{equation}
and the relation $\Phi(x,0)=x$.
\end{description}

In \cite{beck2009} axiomatic formulations of the notion of entropy, and in particular the composability property for non-independent systems have been discussed in a different framework.

We shall define \textit{admissible entropies} as those satisfying the first three SK axioms and the \textit{composability axiom}.
The  axiom contains fundamental requirements, for instance the existence of a zeroth law of thermodynamics. Also, when composing a system with another in a certainty state (zero entropy), the entropy of the composed system will remain unchanged. This natural generalization of the SK axiom allows an infinite number of admissible entropic forms \cite{tempesta2014beyond,tempesta2015new}.


The classification and study of the properties of all functions $\Phi(x,y)$ satisfying the composability axiom has been performed in the context of formal group theory \cite{hazewinkel1978formal}. This branch of algebraic topology was established in the second half of the 20th century, starting from the work of \textcite{bochner1946formal}. In particular, there exists a universal group law, the \textit{Lazard formal group}, essential to the discussion of composability.

More precisely, we shall consider a monotonic, strictly increasing function $G(t)$ admitting in $\mathds{R}$ a power series expansion of the form $G\left( t\right) =\sum_{k=0}^{\infty} a_k \frac{t^{k+1}}{k+1}$, where $\{a_{k}\}_{k\in\mathds{N}}$ is a suitable real sequence, with $a_{0}\neq 0$. This series, by means of the Lagrange inversion theorem, is invertible with respect to the composition (i.e., there exists a series $G^{-1}$ such that $G^{-1}(G(t))=t$). The function
\begin{equation}\label{Lazard}
\Phi(x,y)=G\left(G^{-1}(x)+G^{-1}(y)\right).
\end{equation}
defines a group law. One can prove that, given a formal group law $\Phi(x,y)$, there exists a specific power series $G(t)=a_0t+O(t^2)$ such that the law can be represented as in Eq.~\eqref{Lazard}
\footnote{This set of properties holds for group laws on zero--characteristic fields. For further details, see \cite{tempesta2015new}.}.

Consequently, if a functional $S$ satisfies the generalized SK axioms, then the associated composition law \eqref{composability} can be realized in terms of the law \eqref{Lazard} by means of a specific function $G$.  The rich algebraic structure underlying the composability axiom naturally allows us to classify all entropies possessing a given composition law. Also, we can generate new examples, by varying  the composition law adopted. For the trace-form family (i.e., the family of entropies having the structure $S[P]=\sum_ip_if(p_i)$ for a sufficiently regular $f$), a natural general form is \cite{tempesta2015new}
\begin{equation}
S_G[P]=\sum_{i=1}^Wp_iG\left(\ln\frac{1}{p_i}\right).\label{tf}
\end{equation}
For example, the celebrated Boltzmann entropy corresponds to the \textit{additive group}:
\begin{equation}
\Phi(x,y)=x+y\Rightarrow G(t)=kt,\quad k\in\mathds R^+,
\end{equation}
the constant $k$ being usually identified with the Boltzmann constant.

\noindent The \textit{multiplicative group}
\begin{equation}\label{multgr}\Phi(x,y)=x+y+(1-q)  xy\Rightarrow G(t)=\frac{e^{a(1-q) t}-1}{1-q}\end{equation}
leads to a generalization of the Tsallis entropy $S_q^\mathrm{T}$, defined for $q\in\mathds R$ and $a\in\mathds R^+$ as  \cite{tempesta2015new}
\begin{equation}
S_{a,q}[P]\coloneqq \sum_{i=1}^Wp_i\log_q\frac{1}{p_i^a},
\end{equation}
where $\log_q(x)\coloneqq\frac{x^{1-q}-1}{1-q}\xrightarrow{q\to 1}\ln x$, $x\in\mathds R^+$. Observe that
\begin{equation}
\lim_{a\to 1}S_{a,q}[P]=S^\mathrm{T}_q[P]\quad\text{and}\quad \lim_{q\to 1}S_{a,q}[P]=a S^\mathrm{BG}[P].
\end{equation}
Remarkably, the $S_{a,q}[P]$ entropy is the most general known trace-form entropy that is admissible according to the definition above, including the BG entropy and the Tsallis entropy as particular cases. All other entropies in the form \eqref{tf} satisfy the composability property on the uniform distribution. Dropping out the trace-form hypothesis, new entropic forms are allowed \cite{tempesta2015new}. For example, the R\'enyi entropy
\begin{equation}
S^\textrm{R}_\alpha[P]\coloneqq \frac{\ln\sum_{i=1}^Wp_i^\alpha}{1-\alpha}\xrightarrow{\alpha\to 1}S^\textrm{BG}[P]
\end{equation}
is an additive composable entropy.

Motivated by the previous discussion, we can propose now a notion of information measure that comes directly from the group-theoretical structure. To this regard, one should notice that the additivity property is lost for a very large class of composable entropies discussed above. However, we can overcome this difficulty by associating with each of them an information measure that is indeed additive on statistically independent systems. This measure is determined by the composition group itself, i.e., by the function $G$ appearing in Eq.~\eqref{Lazard}. Consequently, we propose the following main definition of a \textit{group-theoretical generalized information measure}.

Given a composable entropy $S_G$, with a group law of the form \eqref{Lazard}, the information measure associated with $S_G$ for any system $\mathcal{A}$ corresponding to a given probability distribution is defined to be
\begin{equation}
{I}_{G}(\mathcal{A})= G^{-1}(S_G(\mathcal{A})). \label{IF}
\end{equation}

In the specific case of an entropy of trace-form class \eqref{tf}, we recover for our functional the expression of the \textit{Kolmogorov-Nagumo mean} \cite{kolmogorov,nagumo}:
\begin{equation}
{I}_{G}(\mathcal{A})=G^{-1}\left(\sum_{i=1}^{W}p_i G\left(\ln\frac{1}{p_i}\right)\right).
\end{equation}

The information measure \eqref{IF}, however, is defined in the generic case of entropies that are composable and not only for trace-form entropies. We are ready now to present one of the main results of this paper.
\begin{Th}
Let $S_G$ be a composable entropy, with a group law defined by \eqref{Lazard} for a certain function $G$. Then for two statistically independent systems $\mathcal A$ and $\mathcal B$ we have
\begin{equation}{I}_{G}(\mathcal A\cup\mathcal B)={I}_{G}(\mathcal A)+{I}_{G}(\mathcal B).\end{equation}
Moreover, $I_G$ satisfies the following further properties.
\begin{description}
\item[Continuity] $ I_G$ is continuous with respect to its arguments;
\item[Maximum principle] $ I_G$ is maximized on the uniform distribution;
\item[Expansibility] The addition of a zero--probability event does not change the value of ${I}_G$.
\end{description}
\end{Th}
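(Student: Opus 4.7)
The proof will be essentially a direct computation, with each of the four properties reduced to a corresponding property of $S_G$ via the fact that $G$ (and hence $G^{-1}$) is a strictly increasing continuous bijection on the relevant range, with $G(0)=0$ since $G(t)=a_0 t+O(t^2)$.

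For the additivity statement, which is the conceptual heart of the theorem, the plan is to invoke the composability axiom together with the Lazard representation of the group law $\Phi$. Specifically, since $S_G$ is composable with group law \eqref{Lazard}, for independent $\mathcal A$ and $\mathcal B$ one has
\begin{equation}
S_G(\mathcal A\cup\mathcal B)=\Phi(S_G(\mathcal A),S_G(\mathcal B))=G\bigl(G^{-1}(S_G(\mathcal A))+G^{-1}(S_G(\mathcal B))\bigr).
\end{equation}
Applying $G^{-1}$ to both sides, and using definition \eqref{IF}, yields immediately
\begin{equation}
I_G(\mathcal A\cup\mathcal B)=G^{-1}(S_G(\mathcal A))+G^{-1}(S_G(\mathcal B))=I_G(\mathcal A)+I_G(\mathcal B).
\end{equation}

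The three remaining properties all follow by transport along $G^{-1}$. For continuity, $S_G$ is continuous in the probabilities by the first SK axiom, and $G^{-1}$ is continuous on the image of $S_G$ (it is the compositional inverse of a convergent, strictly monotonic power series with $a_0\neq 0$), so the composition $I_G=G^{-1}\circ S_G$ is continuous. For the maximum principle, the hypothesis that $G$ be strictly increasing implies $G^{-1}$ is strictly increasing on its domain; since $S_G$ attains its maximum on the uniform distribution by the second SK axiom, so does $I_G$. For expansibility, the SK axiom guarantees $S_G(p_1,\dots,p_W,0)=S_G(p_1,\dots,p_W)$, and applying $G^{-1}$ to both sides preserves the equality.

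The only mild subtlety that I would need to keep in mind is to ensure that the argument of $G^{-1}$ always lies in the range where the formal inverse actually converges and represents a genuine function, rather than only a formal power series; this is guaranteed by the assumptions on $G$ already imposed in the excerpt (monotonic, strictly increasing, real-analytic with $a_0\neq 0$), together with the fact that $S_G$ takes nonnegative real values on $\mathscr P$. Since no step requires additional structure beyond what has already been assumed for admissible entropies, I do not expect any serious obstacle; the theorem is essentially a clean corollary of the Lazard representation of the composition law combined with the three original SK axioms.
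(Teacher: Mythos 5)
Your proposal is correct and follows essentially the same route as the paper: additivity is obtained by applying $G^{-1}$ to the composability relation written in Lazard form, and the remaining three properties are transported from the corresponding SK axioms for $S_G$ via the continuity and strict monotonicity of $G^{-1}$. Your additional remark about ensuring convergence of the inverse series on the actual range of $S_G$ is a sensible precaution that the paper leaves implicit.
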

\begin{proof} Observe indeed that
\begin{multline}{I}_{G}(\mathcal A\cup\mathcal B)=\\=G^{-1}(S_G(\mathcal A\cup \mathcal B))=G^{-1}\left(\Phi(S_G(\mathcal A), S_G(\mathcal B))\right)\\ =G^{-1}\left\{G\left[G^{-1}(S_G(\mathcal A))+G^{-1}(S_G(\mathcal B))\right]\right\}\\
={I}_{G}(\mathcal A)+{I}_{G}(\mathcal B).\end{multline}
All the other properties of $ I_G$ derive from the properties of $S_G$ imposed by the generalized SK axioms and from the strict monotonicity of $G$.\end{proof}
Therefore all composable entropies possess an associated information measure which is additive and can be constructed directly through the function $G$. Observe that the strict monotonicity of $G$ (and therefore of $G^{-1}$) implies that $S_G(\mathcal A)<S_G(\mathcal B)\Rightarrow  I_G(\mathcal A)<  I_G(\mathcal B)$, coherently with the fact that  the entropic forms allowed by the axioms above do possess indeed an information content even if non--additive (see for example \cite{tsallis1998,canosa2002} for the study of the Tsallis entropic form as information functional and for applications).

Observe also that the R\'enyi entropy fits naturally in our scheme. Indeed, for $S^\textrm{R}_\alpha$, $G(t)=t$, then the associated information measure $ I^\textrm{R}_\alpha$ coincides with the entropic functional, i.e.
\begin{equation}
S^\textrm{R}_\alpha(\mathcal A)\mapsto  I^\textrm{R}_\alpha(\mathcal A)\equiv S^\textrm{R}_\alpha(\mathcal A).
\end{equation}
The previous identity holds also in the $\alpha\to 1$ limit, i.e., in the BG case. In other words, the BG entropy and the R\'enyi entropy are stable with respect to definition \eqref{IF}: the associated group-theoretical information measure coincides with the corresponding entropy. If we consider instead the $S_{a,q}$ entropy, we see that
\begin{equation}
S_{a,q}(\mathcal A)\mapsto I_{a,q}(\mathcal A)\equiv S^\textrm{R}_{1+a(q-1)}(\mathcal A).
\end{equation}
In particular, the information measure associated to $S_q^\mathrm{T}$ ($a\equiv 1$) is the R\'enyi entropic functional with parameter $q$. However, as explained above, our formalism goes beyond the requirement of a Kolmogorov--Nagumo structure and it holds, indeed, for all nontrace-form admissible entropies, having a more general form for the corresponding information functional. For a large family of nontrace-form entropies (for example, for the entropic functionals discussed in \cite{tempesta2015new}), the analytic expression of the information functional is of the type
\begin{equation}
S_G(\mathcal A)\mapsto  I_G(\mathcal A)=\sum_i c_i S^\textrm{R}_{\alpha_i}(\mathcal A),
\end{equation}
where the parameters $c_i$ and $\alpha_i$ depend on the parameters appearing in the entropy $S_G$.

Finally, the presented group-theoretical approach allows us to generalize easily Einstein's principle, and to connect it with information theory in a natural way. Given an entropy $S_G$, whose associated group law is Eq.~\eqref{Lazard}, we introduce the likelihood function
\begin{equation}\label{likelihood}
\mathcal W_G(\mathcal A)\coloneqq e^{ I_G(\mathcal A)}\equiv e_G^{S_G(\mathcal A)},
\end{equation}
where
\begin{equation}
e_G(x)\coloneqq \exp\left[G^{-1}(x)\right].
\end{equation}
The motivation for this definition is twofold. First, it relates Einstein's likelihood function directly with a group-theoretical information measure. Second, it generalizes Einstein's relations \cite{Einstein,tsallis2015boltzmann} in the case of independent systems for all composable entropies. Indeed, let $S$ be a composable entropy, whose group law is given by Eq.~\eqref{Lazard}. Then Einsten's likelihood principle \eqref{ELP} follows immediately from the additivity property of the information functional \eqref{IF}. In the case of the multiplicative group \eqref{multgr}, we recover the likelihood function recently introduced in \cite{tsallis2015boltzmann}. Once again, for generalized entropies that are composable only over the uniform distribution, just a weak formulation of the principle holds (needless to say, this situation is not very satisfactory).

In the light of the whole analysis of this paper, we conclude that the composability axiom allows a potentially fruitful interpretation of generalized entropies in information theory. Indeed, composable entropies both possess an information theoretical content and satisfy Einstein's principle, which is a crucial statement for any physical application of the notion of entropy. As a byproduct of our analysis, it emerges that also nontrace-form but composable entropies can play an important role in statistical mechanics.

\begin{acknowledgments}
\paragraph{Acknowledgments}We thank Francesco Toppan and Constantino Tsallis  for useful discussions. G.S.\@ acknowledges the financial support of the John Templeton Foundation. The research of P.T. has been partly supported by the project FIS2015-63966, MINECO, Spain, and by the ICMAT Severo Ochoa project SEV-2015-0554 (MINECO).
\end{acknowledgments}
\bibliography{Biblio.bib}
\end{document}